\newtheorem{Proposition}{\underline{Theorem}}
\begin{document}

\title{On the Convex Properties of Wireless Power Transfer with Nonlinear Energy Harvesting
}
\author[$\dag$]{Yulin Hu,   Xiaopeng Yuan, Tianyu Yang, Bruno Clerckx
and Anke Schmeink 
}
\maketitle

\begin{abstract}
The convex property of a nonlinear   wireless power transfer (WPT)      is characterized in this work.
Following a nonlinear energy harvesting model, we express the relationship between  the harvested  direct current (DC)  power and  the power of  the received radio-frequency signal via an implicit  function, based on which  the convex property is further proved.  In particular, for a predefined rectifier's input signal distribution, we show that the harvested DC power of the nonlinear model is convex in the reciprocal of the rectifier's input signal power. Finally, we provide an  example to show  the advantages of applying the convex property in WPT network designs.
\end{abstract}

\section{Introduction}
Wireless power transfer (WPT) via radio-frequency (RF) radiation  has attracted significant attention in recent years. In particular, RF radiation has indeed become a viable source for energy harvesting (EH) with clear applications
in wireless sensor networks  in Smart City and  Internet of Things scenarios~\cite{backgournd}.
In the EH process, the received RF signal   is required to be converted into a direct current (DC)  signal.
Generally, this    EH model  (i.e., RF-to-DC conversion) has been considered as either a linear or a nonlinear process in the literature.
Under the assumption of a linear power EH model, various   works have been made to propose  optimal designs for   WPT networks, e.g., optimal scheduling~\cite{linear2} and resource allocation~\cite{linear3}.
Unfortunately,  in practice the RF-to-DC conversion is generally nonlinear, which    makes all the results conducted following the linear EH model  inaccurate~\cite{Clerckx_a,Clerckx_b,Clerckx_c ,Clerckx_d}.
In particular, the harvested DC power depends on the properties of the input signal (power and shape), while the classical linear model ignores much of such dependency.
 
A general nonlinearity of the EH model has been proposed in~\cite{Bruno_2016} via implicit equations, where the nonlinearity of the rectifier is characterized by the fourth and
higher order terms, which makes this model  more accurate and therefore widely-accepted. 
Following this nonlinear model,
a set of studies~\cite{Nonlinear1,Nonlinear2,Nonlinear3}  have provided  suboptimal resource/power allocation designs for the WPT network.
On the other hand, in comparison to the linear model, this nonlinearity makes the WPT network design problems  more complex and challenging, {as  the objective function or the constraint functions in these problems involve  implicit functions (representing the nonlinear   the RF-to-DC conversion), due to which the convex features of the problems are unlikely to be shown.}
Nevertheless, recently in~\cite{Bruno_2016} it is shown that from a signal design perspective,
maximizing the output DC current is equivalent to the  modified problem of treating the rectifier parameters (see $k_i$ in Equation (5) of~\cite{Bruno_2016})  constant.
 This highlights the dependency of the harvested DC power on the input signal and is
then leveraged in~\cite{Clerckx_TWC2018} to show the convexity of the diode current with respect to the input signal.
This convexity  is the reason why input distributions such as non-zero mean~\cite{Clerckx_e}, real Gaussian~\cite{Clerckx_f} and on-off keying~\cite{Clerckx_g} are favoured for simultaneous wireless information and power transfer (SWIPT).

In this paper, we build upon these observations and further analyze the convexity properties of the energy harvester in WPT, while taking the consideration of  the variableness of  the rectifier parameters. 
 In particular,  we   prove that under a predesigned     waveform (i.e., the distribution of the input signal is given),
 the harvested DC power via the nonlinear WPT model proposed in~\cite{Bruno_2016} is convex in the reciprocal of      the input signal power.
 This   is then shown using a simple example  to facilitate      the design (e.g., positioning, scheduling and so on) of  WPT networks.

The remaining of the paper is organized as follows. In Section~II, we first review the nonlinear WPT  model introduced in ~\cite{Bruno_2016}, following which we express an implicit equation representing the nonlinear relationship between  the harvested DC power and the power of the RF signal. Subsequently, we further characterized the convexity property of the nonlinear model~in Section III. In Section IV, we provide an example applying the introduced convex property in WPT network designs.
Finally, Section~IV provides our conclusions.

\section{Nonlinear Charging Model}
 To expresses the non-linearity of the diode,   by applying a Taylor expansion of  the   diode current~$i_{\rm d}$          the authors in \cite{Bruno_2016} show that  the following relationship between~$i_{\rm d}$ and the received/input signal $y_{\rm in}$ holds (see Eq.(6) in~\cite{Bruno_2016} for details) 
\begin{equation}
i_{\rm d}  = \sum\nolimits_{i=0}^\infty  {\cal K}_i (I_{\rm out})    R_{\rm ant}^{\frac{i}{2}}  {y_{\rm in}}^i,
\end{equation}
where  
 ${\cal K}_i (I_{\rm out}) $, $i=0,...,\infty$ are the rectifier characteristic functions with respect to the  rectifiers' output current $I_{\rm out}$. Specifically, ${\cal K}_i (I_{\rm out}) $ is defined as follows: For  i=0,  ${\cal K}_0 (I_{\rm out})  = {I_{\rm{s}}}\big( {{e^{ - \frac{{{I_{{\rm{out}}}}R}}{{n{v_t}}}}} - 1} \big)$, and for $i=1,...,\infty$,  it is given by ${\cal K}_i (I_{\rm out})  = {I_{\rm{s}}}\frac{{{e^{ - \frac{{{I_{{\rm{out}}}}R}}{{n{v_t}}}}}}}{{i!{{(n{v_t})}^i}}}$,
where $n$ is the ideality factor, and $v_t$ is the thermal voltage.

In particular, this nonlinear model in~\cite{Bruno_2016}  truncates the Taylor expansion to the $n_o$th order but
retains the fundamental non-linear behavior of the diode. After the truncation,  the    rectifiers' output current $I_{\rm out}$ is given by
\begin{equation}
I_{\rm out} \approx \sum\nolimits_{i=0}^{n_o} {\cal K}_i (I_{\rm out})   R_{\rm ant}^{\frac{i}{2}}\mathbb{E}\{{y_{\rm in}}^i\},
\end{equation}
where $R_{\rm ant}$ is the antenna impedance. 
According to Equation (19) in~\cite{Bruno_2016}, the following relationship holds
\begin{equation}
\label{eq:bruno_truncated}
\begin{split}
 e^{\frac{R_{\rm L} I_{\rm out}}{n v_t}}(I_{\rm out}+I_{\rm s})
&  \approx
  I_{\rm s}+\sum_{i~\text{
  even},i\geq 2}^{n_o}{\bar k}_i  R_{\rm ant}^{\frac{i}{2}}\mathbb{E}\{{y_{\rm in}}^i\},
\end{split}
\end{equation}
where    $I_{\rm s}$ is the reverse bias saturation current and    $n_o$ (even) is the truncation order.
In addition, ${\bar k}_i =\frac{I_{\rm s}}{i! (nv_t)^i}$ for $i\geq 2$  and with $i$ even, are the rectifiter characteristic constants, i.e., not influenced by $I_{\rm out}$. 

Denote by $R_{\rm L}$ the load resistance, then  the harvested DC power $P_{\rm dc}$ can be obtained by 
\begin{equation}
\label{eq:P_I}
P_{\rm dc}=I_{\rm out}^2 R_{\rm L}.
\end{equation}

We consider a network under a predesigned     waveform, i.e., the distribution of the input signal $y_{\rm{in}}$ is  given and hence  the $i$-th moment of  $y_{\rm{in}}$ can be known\footnote{
On Table III of \cite{Clerckx_TWC2019},   input signals with different distributions/modulations are discussed.}.
Then,
  the harvested DC power $P_{\rm dc}$ and the power $Q_{\rm rf}$ of the received RF signal can be presented by a nonlinear implicit function $\mathcal{F}_{\rm nl}$, i.e.,  $P_{\rm dc}=\mathcal{F}_{\rm nl}(Q_{\rm rf})$.
  In particular, the power of the received (input)  signal can be obtained by     $Q_{\rm rf} = \mathbb{E}\{y_{\rm in}^2\}$.
In addition, with the distribution of  $y_{\rm{in}}$,  we can draw that $\mathbb{E}\{{y_{\rm in}}^i\}$ ($i$ is even) is proportional to $Q_{\rm rf}^{i/2}$, given by
\begin{equation}
\label{eq:even_i_y_in}
\mathbb{E}\{{y_{\rm in}}^i\}=\lambda_i Q_{\rm rf}^{i/2},~i~\text{even},
\end{equation}
where 
 $\lambda_i $ is the waveform factor  with a unit power, given by ${\lambda _i} = \frac{{{\mathbb E}\{ {y_{{\rm{in}}}}^i\} }}{{{{\left( {{\mathbb E}\{ {y_{{\rm{in}}}}^2\} } \right)}^{\frac{i}{2}}}}}$. 

Combining~\eqref{eq:even_i_y_in} to~\eqref{eq:bruno_truncated}, we conduct the following relationship between $I_{\rm out}$ and  $Q_{\rm rf}$
\begin{equation}
\label{eq:sum_loops}
e^{\frac{R_{\rm L} I_{\rm out}}{n v_t}}(I_{\rm out}+I_{\rm s}) \approx \sum\nolimits_{j=0}^{n_o'} \alpha_j Q_{\rm rf}^j,
\end{equation}
where $n_o'=n_o/2$,
$\alpha_0=I_{\rm s}>0$ and $\alpha_j={\bar k}_{2j}  R_{\rm ant}^{j} \lambda_{2j}>0,~j\geq 1$. Clearly, the charged current $I_{\rm out}$ is an implicit function of $Q_{\rm rf}$.
Based on~\eqref{eq:P_I}, we further conclude that the harvested DC power $P_{\rm dc}$ is also an implicit function of $Q_{\rm rf}$, which can be expressed as  $P_{\rm dc}=\mathcal{F}_{\rm nl}(Q_{\rm rf})$.
So far, we have reviewed the nonlinear WPT  model introduced in~\cite{Bruno_2016} and discussed the relationship between the harvested DC power $P_{\rm dc}$ and the  $Q_{\rm rf}$  power of the   received    signal.
In the next section, we further characterize the convex property of the above WPT model.
\section{Convex Property of the Nonlinear WPT}
The harvested DC power in the EH process $P_{\rm dc}=\mathcal{F}_{\rm nl}(Q_{\rm rf})$ is an implicit function of  the power of the   received  RF signal $Q_{\rm rf}$, while  $Q_{\rm rf}$ can be seen as the interface between the EH process and the RF signal transmission process.
However, the relationship between $P_{\rm dc}$ and  $Q_{\rm rf}$ so far has been only  characterized implicitly by a nonlinear implicit  function $\mathcal{F}_{\rm nl}$, which introduces significant difficulties to  maximize~the~harvested DC power by applying optimal designs in the  RF signal transmission process. To address this issue, we show the convex property of the nonlinear WPT model in the following.

First,  we introduce a variable $u$ such that   $Q_{\rm rf}$  is modeled as a function of~$u$, where $u$ could be a  variable/factror  considered in the 
RF signal transmission process\footnote{For instance,  under a predesigned waveform ($y_{\rm{in}}$ has a given distribution),
  if we let $u$ denote the square of the distance between transmitter and receiver,  the received RF power $Q_{\rm rf}$ (in a free space channel) can be modeled as $Q_{\rm rf}=\frac{Q_0}{u}$, where $Q_0$ is the received power at unit distance. Except that, several factors, e.g., the transmit power at the transmitter and the gain of the channel selected for WPT, significantly influence $Q_{\rm rf}$. Depending on the   system design problem, one can model $u$ as one of the above factors or a function/combination of some of the factors.}.    
In addition, to facilitate our proof we define by $\rho(u)$ the right side of~\eqref{eq:sum_loops}, i.e., 
\begin{equation}\label{eq:rho_definition}
\rho(u)\buildrel \Delta \over =\sum\nolimits_{j=0}^{n_o'} \alpha_j \left( Q_{\rm rf}(u)\right)^j.
\end{equation}
Recall that $\alpha_j>0$ for $j=0,..,n_o'$ and the received  RF signal power has a  positive value $ Q_{\rm rf}(u)> 0$.
Hence, we have $\rho(u) > 0$.
According to~\eqref{eq:sum_loops}, it also holds that
\begin{equation}
e^{\frac{R_{\rm L} I_{\rm out}}{n v_t}}(I_{\rm out}+I_{\rm s})=\rho(u),
\end{equation}
where $I_{\rm out}$ is an implicit function of  ${\rho (u)}$. Hence, $I_{\rm out}$ is also a function of $u$.  We define this function by
$I_{\rm out}={{\mathop{\mathcal I}\nolimits} _{\rm out}}\left( {\rho (u)} \right)$.

We have the following key proposition addressing the     convexity of function ${{\mathop{\mathcal I}\nolimits} _{\rm out}}$.

\begin{Proposition}
 \label{le:Proposition1}
 ${\cal I}_{\rm out}\left(\rho(u)\right)$ and $P_{\rm dc}$ are convex in $u$, if the following inequality holds
\begin{equation}\label{eq:rho_condition}
{{\mathop \rho \limits^{..} (u)}-{\mathop \rho \limits^. (u)}^2\frac{1}{\rho(u)}}\geq 0,
\end{equation}
where ${\mathop \rho \limits^{.} (u)}$ and ${\mathop \rho \limits^{..} (u)}$ are the first order and second order derivatives of $\rho  (u)$ to $u$.
\end{Proposition}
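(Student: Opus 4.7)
The plan is to establish $\ddot{\mathcal I}_{\rm out}(\rho(u)) \ge 0$ by two rounds of implicit differentiation, and then derive convexity of $P_{\rm dc}$ as a direct consequence. The key observation is that, because the right-hand side of the defining relation $e^{R_{\rm L}I_{\rm out}/(nv_t)}(I_{\rm out}+I_{\rm s}) = \rho(u)$ appears as $\dot\rho/\rho$ after one differentiation and as $(\ddot\rho\rho - \dot\rho^2)/\rho^2 = (1/\rho)(\ddot\rho - \dot\rho^2/\rho)$ after two, the quantity in hypothesis~\eqref{eq:rho_condition} emerges naturally only if we first take logarithms. So I would rewrite the implicit equation as $\phi(I_{\rm out}) = \ln\rho(u)$, with $\phi(I) := R_{\rm L} I/(nv_t) + \ln(I+I_{\rm s})$.

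First, one differentiation of this log-form gives $\phi'(I_{\rm out})\,\dot{I}_{\rm out} = \dot\rho/\rho$. Differentiating a second time yields $\phi''(I_{\rm out})(\dot{I}_{\rm out})^2 + \phi'(I_{\rm out})\,\ddot{I}_{\rm out} = (1/\rho)(\ddot\rho - \dot\rho^2/\rho)$, which I would solve algebraically for $\ddot I_{\rm out}$. At that point the argument reduces to a sign check: $\phi'(I) = R_{\rm L}/(nv_t) + 1/(I+I_{\rm s}) > 0$ and $\phi''(I) = -1/(I+I_{\rm s})^2 < 0$ for every $I \ge 0$, so the term $-\phi''(I_{\rm out})(\dot I_{\rm out})^2$ on the right-hand side is non-negative, while $(1/\rho)(\ddot\rho - \dot\rho^2/\rho)$ is non-negative by $\rho(u) > 0$ together with~\eqref{eq:rho_condition}. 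Dividing through by $\phi'(I_{\rm out}) > 0$ gives $\ddot I_{\rm out} \ge 0$, which is the convexity of $\mathcal I_{\rm out}(\rho(u))$ in $u$.

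For the $P_{\rm dc}$ part I would use $P_{\rm dc} = R_{\rm L} I_{\rm out}^2$, so that $\ddot P_{\rm dc} = 2R_{\rm L}\bigl[(\dot I_{\rm out})^2 + I_{\rm out}\,\ddot I_{\rm out}\bigr]$; non-negativity then follows once we also know $I_{\rm out} \ge 0$. I would verify this last fact by noting that the map $I \mapsto e^{R_{\rm L}I/(nv_t)}(I+I_{\rm s})$ is strictly increasing on $I \ge 0$ (its derivative equals $g\cdot\phi'$, both factors positive), equals $I_{\rm s}$ at $I=0$, and is matched in~\eqref{eq:sum_loops} against $\rho(u) = I_{\rm s} + \sum_{j\ge 1}\alpha_j Q_{\rm rf}^j \ge I_{\rm s}$ since every $\alpha_j>0$; invertibility then places $I_{\rm out}$ in $[0,\infty)$.

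I expect the only real obstacle to be bookkeeping rather than any deep inequality. The delicate point is choosing the logarithmic parametrisation in the first place, so that the very expression $\ddot\rho - \dot\rho^2/\rho$ singled out by~\eqref{eq:rho_condition} appears as a clean factor on the right-hand side of the twice-differentiated equation, and then tracking the sign of the residual $\phi''$-term carefully so that it \emph{aids} rather than obstructs the desired inequality.
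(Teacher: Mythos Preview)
Your argument is correct and, in fact, cleaner than the paper's. Both proofs proceed by implicit differentiation of the defining relation $e^{R_{\rm L}I_{\rm out}/(nv_t)}(I_{\rm out}+I_{\rm s})=\rho(u)$ and then perform a sign analysis, but the organization differs. The paper differentiates the relation \emph{as written}, obtaining $\dot{\mathcal I}_{\rm out}=\dot\rho\big/\!\big(\tfrac{R_{\rm L}}{nv_t}\rho+e^{R_{\rm L}\mathcal I_{\rm out}/(nv_t)}\big)$ and a correspondingly bulky expression for $\ddot{\mathcal I}_{\rm out}$; it then needs a separate algebraic step---expanding $\big(\tfrac{R_{\rm L}}{nv_t}\rho+e^{\cdots}\big)^2$ and discarding the square term $e^{2(\cdots)}$---to manufacture the factor $\ddot\rho-\dot\rho^2/\rho$ and conclude. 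Your logarithmic reparametrisation $\phi(I_{\rm out})=\ln\rho(u)$ with $\phi(I)=R_{\rm L}I/(nv_t)+\ln(I+I_{\rm s})$ short-circuits this: the second derivative of $\ln\rho$ is automatically $(\ddot\rho\rho-\dot\rho^2)/\rho^2$, so the hypothesis appears as a ready-made factor, and the paper's ad~hoc inequality is replaced by the transparent observation $\phi''(I)=-1/(I+I_{\rm s})^2<0$. The two computations are equivalent (indeed $\tfrac{R_{\rm L}}{nv_t}\rho+e^{R_{\rm L}I_{\rm out}/(nv_t)}=\rho\,\phi'(I_{\rm out})$ links the denominators), but your packaging exposes the structure more directly and also makes explicit the argument for $I_{\rm out}\ge 0$, which the paper simply asserts.
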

\begin{proof}
The first order derivative of ${{\mathop{\mathcal I}\nolimits} _{\rm out}}\left( {\rho (u)} \right)$ to $u$ is given~by
\begin{eqnarray}
\label{eq:1st_derivative}
{{\mathop {\mathop{\mathcal I}\nolimits} \limits^.} _{\rm out}}\left( {\rho (u)} \right) = \frac{{\mathop \rho \limits^. (u)}}{{\frac{{{R_{\rm L}}}}{{n{v_t}}}\rho (u) + {e^{\frac{{{R_{\rm L}}{{\mathop{\mathcal I}\nolimits} _{\rm out}}\left( {\rho (u)} \right)}}{{n{v_t}}}}}}}.
\end{eqnarray}
Based on~\eqref{eq:1st_derivative}, the second order derivative can be obtained, which is provided in~\eqref{eq:2nd_derivative} on the top of next page.

\begin{table*}[!h]
\begin{equation}
\label{eq:2nd_derivative}
\begin{split}
{{\mathop {\cal I}\limits^{..}} _{{\rm{out}}}}\left( {\rho (u)} \right) &= \frac{\mathop \rho \limits^. (u) - \frac{{{R_{\rm{L}}}}}{{n{v_t}}}{{\mathop {\cal I}\limits^. }_{{\rm{out}}}}{{\left( {\rho (u)} \right)}^2}\left(\frac{{{R_{\rm{L}}}}}{{n{v_t}}}\rho (u) + 2{e^{\frac{{{R_{\rm{L}}}{{\cal I}_{{\rm{out}}}}\left( {\rho (u)} \right)}}{{n{v_t}}}}}\right)}{\frac{{{R_{\rm{L}}}}}{{n{v_t}}}\rho (u) + {e^{\frac{{{R_{\rm{L}}}{{\cal I}_{{\rm{out}}}}\left( {\rho (u)} \right)}}{{n{v_t}}}}}}\\
&= \frac{1}{{\frac{{{R_{\rm{L}}}}}{{n{v_t}}}\rho (u) + {e^{\frac{{{R_{\rm{L}}}{{\cal I}_{{\rm{out}}}}(\rho (u))}}{{n{v_t}}}}}}}\left( {\mathop \rho \limits^{..} (u) - \frac{{{R_{\rm{L}}}\mathop \rho \limits^. {{(u)}^2}}}{{n{v_t}}} \cdot \frac{{\frac{{{R_{\rm{L}}}}}{{n{v_t}}}\rho (u) + 2{e^{\frac{{{R_{\rm{L}}}{{\cal I}_{{\rm{out}}}}\left( {\rho (u)} \right)}}{{n{v_t}}}}}}}{{{{\left( {\frac{{{R_{\rm{L}}}}}{{n{v_t}}}\rho (u) + {e^{\frac{{{R_{\rm{L}}}{{\cal I}_{{\rm{out}}}}\left( {\rho (u)} \right)}}{{n{v_t}}}}}} \right)}^2}}}} \right).
\end{split}
\end{equation}
\hrule
\end{table*}


Note that it holds
\begin{eqnarray}
\label{eq:ineq_derivative1}
\begin{split}
\!\!\!\!\!\!&~{\left( {\frac{{{R_{\rm{L}}}}}{{n{v_t}}}\rho (u) + {e^{\frac{{{R_{\rm{L}}}{{\cal I}_{{\rm{out}}}}(\rho (u))}}{{n{v_t}}}}}} \right)^{\!\!2}}\\
 = &~ {\left( {\frac{{{R_{\rm{L}}}}}{{n{v_t}}}\rho (u)} \!\right)^{\!\!\!2}} \!\!+ 2\frac{{{R_{\rm{L}}}}}{{n{v_t}}}\rho (u){e^{\frac{{{R_{\rm{L}}}{{\cal I}_{{\rm{out}}}}(\rho (u))}}{{n{v_t}}}}} \!\!+ {e^{\frac{{2{R_{\rm{L}}}{{\cal I}_{{\rm{out}}}}(\rho (u))}}{{n{v_t}}}}}\\
 > &~ {\left( {\frac{{{R_{\rm{L}}}}}{{n{v_t}}}\rho (u)} \right)^{\!\!\!2}}\!\! + 2\frac{{{R_{\rm{L}}}}}{{n{v_t}}}\rho (u){e^{\frac{{{R_{\rm{L}}}{{\cal I}_{{\rm{out}}}}(\rho (u))}}{{n{v_t}}}}} \\
 = &~  \rho (u)\frac{{{R_{\rm{L}}}}}{{n{v_t}}}\left( {\frac{{{R_{\rm{L}}}}}{{n{v_t}}}\rho (u) + 2{e^{\frac{{{R_{\rm{L}}}{{\cal I}_{{\rm{out}}}}(\rho (u))}}{{n{v_t}}}}}} \right).
\end{split}
\end{eqnarray}

As ${\rho(u)}>0$, according to \eqref{eq:ineq_derivative1} we have
\begin{eqnarray}
\label{eq:ineq_derivative2}
\frac{\frac{R_{\rm L}}{n v_t}\left(\frac{R_{\rm L}}{n v_t}\rho (u)+2e^{\frac{R_{\rm L} {\cal I}_{\rm out}(\rho(u))}{n v_t}}\right)}{\left(\frac{R_{\rm L}}{n v_t}\rho (u)+e^{\frac{R_{\rm L} {\cal I}_{\rm out}(\rho(u))}{n v_t}}\right)^{\!\!\!2}} &<& \frac{1}{\rho(u)}.
\end{eqnarray}

Combining~\eqref{eq:2nd_derivative} with \eqref{eq:ineq_derivative2}, we have
\begin{eqnarray}
{{\mathop {\cal I}\limits^{..}} _{{\rm{out}}}}\left( {\rho (u)} \right)  &>& \frac{{\mathop \rho \limits^{..} (u)}-{\mathop \rho \limits^. (u)}^2\frac{1}{\rho(u)}}{\frac{R_{\rm L}}{n v_t}\rho (u)+e^{\frac{R_{\rm L}{\cal I}_{\rm out}\left(\rho(u)\right)}{n v_t}}} .
\end{eqnarray}

Hence, if ${{\mathop \rho \limits^{..} (u)}-{\mathop \rho \limits^. (u)}^2\frac{1}{\rho(u)}}\geq 0$ holds,
${\cal I}_{\rm out}\left(\rho(u)\right)$  is  convex with respect to variable $u$. 
Noting that the output current is definitely non-negative, i.e., $I_{\rm out}\geq 0$ holds. 
According to \eqref{eq:P_I},   $P_{\rm dc}$ is also  convex in $u$ if ${{\mathop \rho \limits^{..} (u)}-{\mathop \rho \limits^. (u)}^2\frac{1}{\rho(u)}}\geq 0$ holds.
\end{proof}
%
Furthermore, based on the convexity proved in Theorem \ref{le:Proposition1} we can derive out a more visualized sufficient condition.
\begin{Proposition}
 \label{le:Proposition3}
 The inequality \eqref{eq:rho_condition} holds if 
\begin{equation}\label{eq:sufficient_condition}
{{{\mathop Q \limits^{..}}_{\rm rf} (u)}Q_{\rm rf}(u)-\left({{\mathop Q \limits^.} _{\rm rf} (u)}\right)^2}\geq 0,
\end{equation}
holds, where ${{\mathop Q \limits^{.}}_{\rm rf} (u)}$ and ${{\mathop Q \limits^{..}}_{\rm rf} (u)}$ are the first order and second order derivatives of $Q_{\rm rf}  (u)$ to $u$.
\end{Proposition}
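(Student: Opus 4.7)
The plan is to recognize that both the hypothesis \eqref{eq:sufficient_condition} and the conclusion \eqref{eq:rho_condition} are exactly \emph{logarithmic convexity} statements. Since $Q_{\rm rf}(u)>0$ and $\rho(u)>0$, a direct calculation gives
\begin{equation*}
(\log Q_{\rm rf})'' = \frac{\ddot{Q}_{\rm rf}(u)\,Q_{\rm rf}(u)-\dot{Q}_{\rm rf}(u)^2}{Q_{\rm rf}(u)^2},\qquad (\log\rho)''=\frac{\ddot\rho(u)\,\rho(u)-\dot\rho(u)^2}{\rho(u)^2},
\end{equation*}
so the hypothesis says $Q_{\rm rf}$ is log-convex in $u$, and the conclusion says $\rho$ is log-convex in $u$. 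With this reframing, the proposition becomes: the sum $\rho(u)=\alpha_0+\sum_{j=1}^{n_o'}\alpha_j Q_{\rm rf}(u)^j$ with strictly positive coefficients inherits log-convexity from $Q_{\rm rf}$.

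The cleanest route is to invoke three standard closure properties of log-convex positive functions: (i) positive scalar multiples preserve log-convexity, as $\log(cf)=\log c+\log f$; (ii) positive integer powers preserve log-convexity, as $\log(f^j)=j\log f$ remains convex for $j\geq 1$; and (iii) finite sums of log-convex functions are log-convex, which follows from the fact that the log-sum-exp function is convex and coordinatewise non-decreasing. Together with the trivial log-convexity of the positive constant $\alpha_0$, applying (i)--(iii) term by term to $\rho$ delivers \eqref{eq:rho_condition} immediately.

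If one prefers a self-contained derivation that does not invoke (iii) as a black box, the plan is to expand everything directly. Writing $Q:=Q_{\rm rf}(u)$ and using \eqref{eq:rho_definition}, one obtains
\begin{equation*}
\rho\ddot\rho-\dot\rho^2 = \dot Q^{2}\!\!\sum_{i,j}\alpha_i\alpha_j\, j(j-1-i)\,Q^{i+j-2} + \ddot Q\sum_{i,j}\alpha_i\alpha_j\, j\, Q^{i+j-1}.
\end{equation*}
Symmetrizing the first double sum in $(i,j)$ replaces $j(j-1-i)$ by $\tfrac{1}{2}\bigl[(i-j)^2-(i+j)\bigr]$. Then, since \eqref{eq:sufficient_condition} forces $\ddot Q\geq \dot Q^{2}/Q$, one bounds the second sum from below by $\dot Q^{2}\sum_{i,j}\alpha_i\alpha_j j\,Q^{i+j-2}$, whose symmetrization contributes $\tfrac{1}{2}(i+j)$. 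These two symmetrized contributions combine to $\tfrac{1}{2}\dot Q^{2}\sum_{i,j}\alpha_i\alpha_j (i-j)^2 Q^{i+j-2}\geq 0$, which gives \eqref{eq:rho_condition} and hence, via Theorem \ref{le:Proposition1}, the desired convexity. The main obstacle along this route is purely bookkeeping: spotting the symmetrization in $(i,j)$ that collapses the cross terms into the manifestly non-negative $(i-j)^2$ form; the log-convexity viewpoint in the first approach avoids this computation entirely.
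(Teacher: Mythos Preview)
Your proof is correct and takes a genuinely different route from the paper's. The paper computes $\dot\rho$ and $\ddot\rho$ via the chain rule, separates out the constant term $\alpha_0$, and then applies the Cauchy--Buniakowsky--Schwarz inequality in the form $(\sum a_j)(\sum b_j)\ge(\sum\sqrt{a_jb_j})^{2}$ to obtain $\ddot\rho\,(\rho-\alpha_0)\ge\dot\rho^{2}$, from which \eqref{eq:rho_condition} follows together with the residual slack $\alpha_0\ddot\rho/\rho\ge 0$. Your first argument instead recognizes both \eqref{eq:sufficient_condition} and \eqref{eq:rho_condition} as log-convexity statements and invokes the standard closure of positive log-convex functions under positive scalar multiples, integer powers, and finite sums; this is conceptually cleaner, exposes the structural reason the result holds, and extends immediately to, say, infinitely many terms or more general monomials. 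Your second, self-contained route via the $(i,j)$-symmetrization that collapses the cross terms into $\tfrac{1}{2}(i-j)^{2}$ is also different from the paper's Cauchy--Schwarz step, though the two are close cousins: the paper's inequality is essentially the off-diagonal part of your symmetrization written in square-root form. What the paper's approach buys is an explicit positive remainder term $\alpha_0\ddot\rho/\rho$; what your log-convexity framing buys is transparency and a proof that requires essentially no computation.
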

\begin{proof}
From the definition \eqref{eq:rho_definition} of $\rho(u)$, it can be easily derived that $\rho(u)>0$. Thus, the inequality \eqref{eq:rho_condition} is equivalent to ${\mathop \rho \limits^{..} (u)}\rho(u)\geq{\mathop \rho \limits^. (u)}^2$. Let ${\mathop Q \limits^{.}}_{\rm rf}={\mathop Q \limits^{.}}_{\rm rf} (u)$ and ${\mathop Q \limits^{..}}_{\rm rf}={\mathop Q \limits^{..}}_{\rm rf} (u)$. Then, we can derive that
\begin{equation*}
\begin{split}
\mathop\rho \limits^. (u)&=\frac{d\rho}{d Q_{\rm rf}}{{\mathop Q \limits^{.}}_{\rm rf}} \\
&=\sum\nolimits_{j=1}^{n_o'} \alpha_j j Q_{\rm rf}^{j-1}{{\mathop Q \limits^{.}}_{\rm rf}},\\
\mathop \rho \limits^{..} (u)&=\frac{d^2\rho}{d Q_{\rm rf}^2}{\mathop Q \limits^{.}}_{\rm rf}^2+\frac{d\rho}{d Q_{\rm rf}}{\mathop Q \limits^{..}}_{\rm rf} \\
&=\sum\nolimits_{j=2}^{n_o'} \alpha_j j(j\!-\!1)Q_{\rm rf}^{j-2}{\mathop Q \limits^{.}}_{\rm rf}^2+\sum\nolimits_{j=1}^{n_o'} \alpha_j jQ_{\rm rf}^{j-1}{\mathop Q \limits^{..}}_{\rm rf} \\
&= \alpha_1{\mathop Q \limits^{..}}_{\rm rf}+\sum\nolimits_{j=2}^{n_o'} \alpha_j jQ_{\rm rf}^{j-2}\left((j\!-\!1){\mathop Q \limits^{.}}_{\rm rf}^2+Q_{\rm rf}{\mathop Q \limits^{..}}_{\rm rf}\right)\\
&\geq  \alpha_1{\mathop Q \limits^{..}}_{\rm rf}+\sum\nolimits_{j=2}^{n_o'} \alpha_j j^2Q_{\rm rf}^{j-2}{\mathop Q \limits^{.}}_{\rm rf}^2 \\
&\geq 0~.
\end{split}
\end{equation*}
Based on the expressions of $\mathop\rho \limits^. (u)$ and $\mathop \rho \limits^{..} (u)$, we have
\begin{align}
& {\mathop \rho \limits^{..} (u)}(\rho(u)-\alpha_0) \nonumber\\
\geq & \left(\alpha_1{\mathop Q \limits^{..}}_{\rm rf}+\sum\nolimits_{j=2}^{n_o'} \alpha_j j^2Q_{\rm rf}^{j-2}{\mathop Q \limits^{.}}_{\rm rf}^2\right)\left(\sum\nolimits_{j=1}^{n_o'} \alpha_jQ_{\rm rf}^j \right) \\
\geq &  \left(\alpha_1\sqrt{Q_{\rm rf}{\mathop Q \limits^{..}}_{\rm rf}}+\sum\nolimits_{j=2}^{n_o'} \alpha_j j Q_{\rm rf}^{j-1}|{{\mathop Q \limits^{.}}_{\rm rf}}|\right)^2 \\
\geq&  \left(\alpha_1|{{\mathop Q \limits^{.}}_{\rm rf}}|+\sum\nolimits_{j=2}^{n_o'} \alpha_j j Q_{\rm rf}^{j-1}|{{\mathop Q \limits^{.}}_{\rm rf}}|\right)^2  \\
=&\left(\sum\nolimits_{j=1}^{n_o'} \alpha_j j Q_{\rm rf}^{j-1}{{\mathop Q \limits^{.}}_{\rm rf}}\right)^2 \nonumber \\
=& ~{\mathop \rho \limits^. (u)}^2, \nonumber
\end{align}
where the inequality between (16) and (17) holds according to the Cauchy-Buniakowsky-Schwarz Inequality, and the inequality between (17) and (18) is due to the condition in~\eqref{eq:sufficient_condition}.
Therefore, we can get
\begin{equation}
{\mathop \rho \limits^{..} (u)}-{\mathop \rho \limits^. (u)}^2\frac{1}{\rho(u)}\geq \frac{\alpha_0{\mathop \rho \limits^{..} (u)}}{\rho(u)}\geq 0.
\end{equation}
\end{proof}

Condition \eqref{eq:sufficient_condition} is a sufficient condition of \eqref{eq:rho_condition}. Thus, \eqref{eq:sufficient_condition} can also result in the convexity of ${\cal I}_{\rm out}\left(\rho(u)\right)$ and $P_{\rm dc}$ in~$u$.

Next, {we consider a special type of function  $Q_{\rm rf}(u)$, given by  $Q_{\rm rf}(u)=\frac{a}{u}$, where $a$ is a constant. With such function type, it can be easily proved that ${{{\mathop Q \limits^{..}}_{\rm rf} (u)}Q_{\rm rf}(u)-{{\mathop Q \limits^.} _{\rm rf} (u)}^{\!_2}}>0$ for $u>0$.}
According to Theorem~\ref{le:Proposition1} and Theorem~\ref{le:Proposition3}, $P_{\rm dc}$ is convex with respect to $u>0$.
Combining this example $(a=1)$ with Theorem~\ref{le:Proposition1} and Theorem~\ref{le:Proposition3}, we have
\begin{Proposition}
 \label{le:Proposition2} 
{Under a predesigned waveform (given distribution of the input signal $y_{\rm{in}}$),}
 ${\cal I}_{\rm out}\left(Q_{\rm rf}\right)$ and $P_{\rm dc}$ are convex in $\frac{1}{Q_{\rm rf}}$ for $Q_{\rm rf} >0 $.
\end{Proposition}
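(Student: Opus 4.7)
The plan is to obtain this statement as an immediate corollary of Theorem~\ref{le:Proposition1} and Theorem~\ref{le:Proposition3} by setting $u=1/Q_{\rm rf}$, so that $Q_{\rm rf}(u)=1/u$. This is precisely the $a=1$ instance of the special family $Q_{\rm rf}(u)=a/u$ flagged in the paragraph immediately preceding the statement, and for that family the paper has already noted that the sufficient condition~\eqref{eq:sufficient_condition} is satisfied on $u>0$. Under this reparameterization, convexity in $u$ coincides with convexity in $1/Q_{\rm rf}$, so the proposition reduces to invoking the two preceding results in sequence.

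Concretely, I would carry out the argument in three short steps. Step one: specialize $Q_{\rm rf}(u)=1/u$ and note that a routine differentiation gives $\dot{Q}_{\rm rf}(u)=-1/u^{2}$, $\ddot{Q}_{\rm rf}(u)=2/u^{3}$, whence $\ddot{Q}_{\rm rf}(u)\,Q_{\rm rf}(u)-\bigl(\dot{Q}_{\rm rf}(u)\bigr)^{2}=1/u^{4}>0$ for every $u>0$, so condition~\eqref{eq:sufficient_condition} holds on the domain of interest. Step two: apply Theorem~\ref{le:Proposition3} to upgrade~\eqref{eq:sufficient_condition} to the hypothesis~\eqref{eq:rho_condition} on the composite function $\rho(u)=\sum_{j=0}^{n_{o}'}\alpha_{j}(Q_{\rm rf}(u))^{j}$. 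Step three: apply Theorem~\ref{le:Proposition1} to conclude that ${\cal I}_{\rm out}(\rho(u))$, and hence $P_{\rm dc}=I_{\rm out}^{2}R_{\rm L}$, are convex in $u=1/Q_{\rm rf}$.

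I do not foresee a real obstacle here; the statement is essentially a corollary that specializes the two earlier theorems. The one point I would be careful to spell out is the role of the predesigned-waveform assumption: once the distribution of $y_{\rm in}$ is fixed, the waveform factors $\lambda_{2j}$ and therefore the coefficients $\alpha_{j}={\bar k}_{2j}R_{\rm ant}^{j}\lambda_{2j}$ appearing in~\eqref{eq:rho_definition} are genuine constants, so that $\rho$ is a well-defined function of $u$ through $Q_{\rm rf}(u)$ alone. This is what lets the chain rule calculation in the proof of Theorem~\ref{le:Proposition3} go through, and it also makes the domain $u>0$ match exactly the stated domain $Q_{\rm rf}>0$.
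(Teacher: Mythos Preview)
Your proposal is correct and follows essentially the same route as the paper: set $u=1/Q_{\rm rf}$, verify that $Q_{\rm rf}(u)=1/u$ satisfies~\eqref{eq:sufficient_condition} (your explicit computation $\ddot{Q}_{\rm rf}Q_{\rm rf}-\dot{Q}_{\rm rf}^{2}=1/u^{4}>0$ is exactly the check the paper leaves implicit), and then invoke Theorems~\ref{le:Proposition3} and~\ref{le:Proposition1} in turn. Your additional remark on why the predesigned-waveform assumption fixes the $\alpha_{j}$'s is a helpful clarification but is not needed for the formal argument.
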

{
\begin{proof}
	Simply let $u=\frac{1}{Q_{\rm rf}}$, i.e. $Q_{\rm rf}=\frac{1}{u}$. Hence, $u>0$ holds.  It can be easily proved that $Q_{\rm rf}=\frac{1}{u}$ satisfies the condition in~\eqref{eq:sufficient_condition}. ${\cal I}_{\rm out}\left(Q_{\rm rf}\right)$ and $P_{\rm dc}$ are convex in $u$, i.e. convex in $\frac{1}{Q_{\rm rf}}$.
\end{proof}
}

We validate our analytical model in Fig.~\ref{fig:validation}, where different type of $y_{\rm{in}}$ are considered as per reference \cite{Clerckx_TWC2019}.
Clearly, the results match well with Theorem~\ref{le:Proposition2}.
\begin{figure}[!t]
    \centering
\includegraphics[width=0.47\textwidth,trim=0 17 0 0]{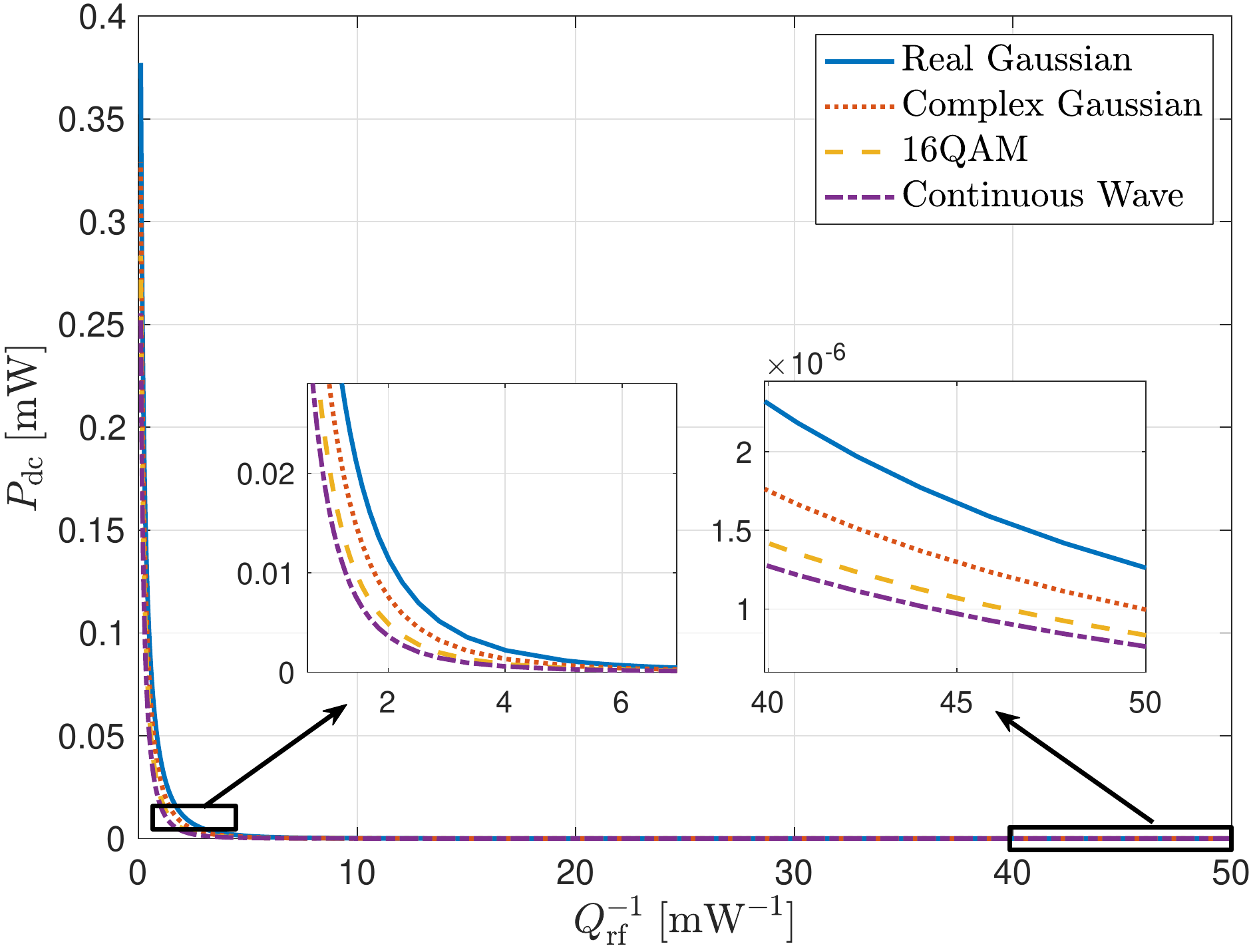}
\caption{Numerical results confirm Theorem~\ref{le:Proposition2}.}
\label{fig:validation}
\end{figure}
According to Theorem~\ref{le:Proposition2},  the harvested DC power is convex in $\frac{1}{Q_{\rm rf}}$.
Note that $Q_{\rm rf}$ is the received RF power and $\frac{1}{Q_{\rm rf}}$ is linear in the path-loss of the wireless link transmitting the RF signal, i.e., $Q_{\text{rf}} = z / d$,  $d$ is the path-loss and $z$ is a weight (e.g., due to channel fading) constant to $d$. 
 Hence, we  conclude that   the harvested DC power is convex in the path-loss of the RF transmission link. This property likely facilitates the optimal position design for the source, especially when the source provides energy supply for multiple users at different locations.
Moreover, the theorem  indicates the convexity between the harvested DC power and   the reciprocal of the transmit power of the RF signal, which provides guidelines for   power allocation designs.
\section{An Application:   WPT Transmitter Positioning   }
 
In this section, we provide a case study  to present the advantage of the proved convex property in solving optimization problems in practical WPT system designs.
Specifically, we consider a WPT system with a WPT transmitter at position  $(x,y)$ and a set of $N$  randomly located WPT receivers with positions  $\{(x_n, y_n)\}$, $n\in \mathcal{N}=\{1,...,N\}$. The received RF power of the $n$-th receiver is $Q_{\text{rf},n} = Q_0 / d_n$, where $Q_0$ is the received RF power at a unit distance and $d_n = (x-x_n)^2+(y-y_n)^2$ is the pathloss of the $n$-th receiver.  
{Taking both the   charging performance and  fairness into account,} we  consider to  maximize the minimal harvested DC power among all receivers by optimizing the transmitter's position $(x,y)$. Formally, the resulting  optimization problem is 
\begin{subequations}  \label{eq:opt}
\begin{align}
\underset{x, y, \tilde{P}_{\text{dc}}}{\text{max}} \;\;  &\;\;  \tilde{P}_{\text{dc}}    \label{opt_a}  \\
{\text{s.t.}} \;\;\;\;    &  \tilde{P}_{\text{dc}} - P_{\text{dc},n}\left(Q_{\text{rf},n}\right) \leq 0, \;\; \forall n \in \mathcal{N}, \label{opt_b}\\
&  x_{\text{min}} \leq x \leq x_{\text{max}}, \label{opt_c}\\
&  y_{\text{min}} \leq y \leq y_{\text{max}}, \label{opt_d}
\end{align}
\end{subequations}
where $P_{\text{dc},n}\left(Q_{\text{rf},n}\right)$ is the harvested DC power of the $n$-th receiver and $\tilde{P}_{\text{dc}}$ is the lower bound of the harvested DC power among all receivers. Moreover, $x_{\text{min}},y_{\text{min}}$ and $x_{\text{max}},y_{\text{max}}$ are the minimal and maximal values of the positions $(x_n, y_n), \forall n \in \mathcal{N}$, respectively.

Note that the problem in (\ref{eq:opt}) is non-convex due to the concave term $-P_{\text{dc},n}$ in (\ref{opt_b}). On the other hand, according to Theorem~\ref{le:Proposition2} $P_{\text{dc},n}$ is convex in $\frac{1}{Q_{\text{rf},n}}$, i.e., {also convex in}  the path-loss $d_n$.
Hence, the successive inner approximation (SIA) method \cite{SIA} can be applied to iteratively solve the convex-approximate problem. Specifically, in the $i$-th iteration, let $d_n^{[i]} = (x^{[i]}-x_n)^2+(y^{[i]}-y_n)^2$ and $\alpha_n^{[i-1]}=-P_{\text{dc},n}'(d_n^{[i-1]}) > 0$,  the first-order Taylor
approximation of term $-P_{\text{dc},n}$ over the point of $d_n^{[i-1]}$ is obtained as
\begin{align}
    -P_{\text{dc},n}(d_n^{[i]}) &\leq \alpha_n^{[i-1]}d_n^{[i]} -\alpha_n^{[i-1]}d_n^{[i-1]}- P_{\text{dc},n}(d_n^{[i-1]}) \\
    &\triangleq \hat{P}_{\text{dc},n}^{[i]}(\mathcal{V}^{[i]},\mathcal{V}^{[i-1]}),
\end{align}
where $\mathcal{V}^{[i]}= (x^{[i]},y^{[i]})$. Note that $\alpha_n^{[i-1]} > 0$ and $d^{[i]}_n$ is jointly convex over $(x^{[i]},y^{[i]})$, thus $\hat{P}_{\text{dc},n}^{[i]}$ is jointly convex over $(x^{[i]},y^{[i]})$. Then, the problem in (\ref{eq:opt}) is solved iteratively until the stable point is achieved. In the $i$-th iteration the approximated problem is written as

\begin{subequations}  \label{eq:opt_convex}
\begin{align}
\underset{\mathcal{V}^{[i]}, \tilde{P}_{\text{dc}}^{[i]}}{\text{max}} \;\;  &\;\;  \tilde{P}_{\text{dc}}^{[i]}   \\
{\text{s.t.}} \;\;\;\;    &  \tilde{P}_{\text{dc}}^{[i]} + \hat{P}_{\text{dc},n}^{[i]}(\mathcal{V}^{[i]},\mathcal{V}^{[i-1]}) \leq 0, \;\; \forall n \in \mathcal{N},\\
&  (\ref{opt_c}), (\ref{opt_d}). \nonumber
\end{align}
\end{subequations}

\begin{figure}[!t]
    \centering
\includegraphics[width=0.47\textwidth,trim=0 15 0 0]{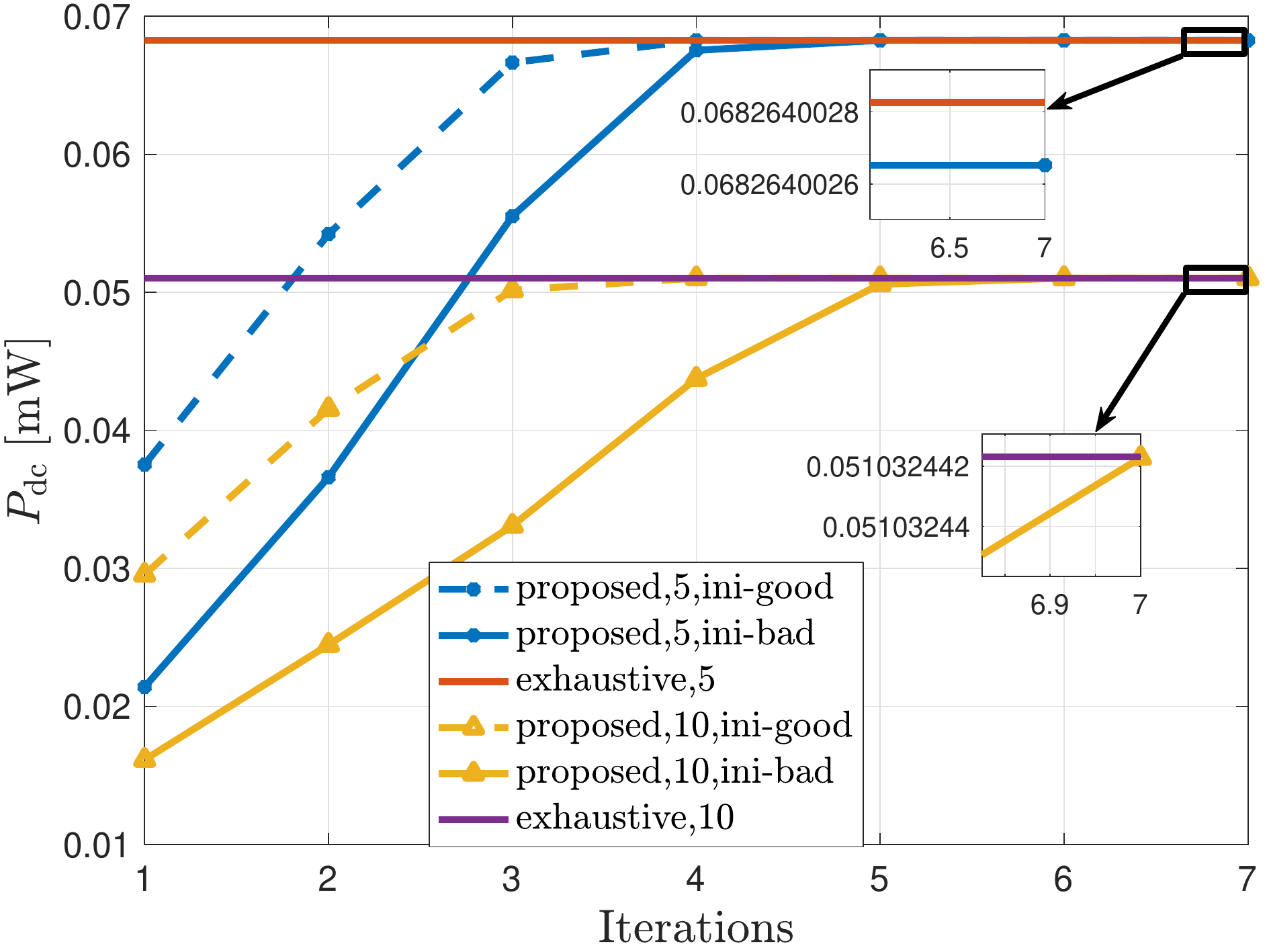}
\caption{Convergence behavior of proposed algorithm with 5 receivers and 10 receivers.}
\label{fig:converge}
\end{figure}

{
In the simulation, the WPT receivers are randomly located in a square area with the width of 5m. The transmit power of the WPT transmitter is set to 30dBm. The received RF power at a unit distance is set to 10dBm, i.e., $Q_0 = 10$dBm. In order to evaluate the performance of our iterative algorithm, we compare the results with the  optimum obtained by a grid based  exhaustive search. The exhaustive search is conducted as follows. First, we define the searching area as $\mathcal{A} := \{(x,y) | x_\text{min} \leq x \leq x_\text{max}, y_\text{min} \leq y \leq y_\text{max}\}$. Then, we discretize the area $\mathcal{A}$ into meshes with the resolution of $\xi$ and get a grid based searching area defined as $\tilde{\mathcal{A}} = \{(x_i,y_i) | i\in \mathcal{I}\}$, where $(x_i,y_i)$ is the $i$-th grid point and $\mathcal{I}$ is the index set of all grid points. Finally, we calculate the minimal harvested DC power among all receivers for each grid point, which results in a set of solutions $\{P_{\text{dc},i}| \forall i \in \mathcal{I}\} $. The result of the exhaustive search is obtained by taking the maximum value among these solutions, i.e., $P^\star_{\text{dc}} = \underset{i}{\text{max}} \{P_{\text{dc},i}\}$. In our simulation, the grid resolution is chosen as $\xi=0.001$. The corresponding relative difference between the results at the optimal point of exhaustive search and its adjacent points is with the magnitude of $10^{-7}$, which shows a high accuracy of the chosen resolution.

In Fig. \ref{fig:converge} the convergence behavior of the proposed iterative algorithm with 5 and 10 receivers is depicted. For each scenario the iterative algorithm is tested with two initial points. Specifically, one initial point is chosen as the point that is very close to a WPT receiver, denoted as the ``ini-bad'' in the figure, and another initial point is chosen as the point that is close to the geometry center point of all WPT receivers, denoted as the ``ini-good'' in the figure. It is observed that the results of both initializations of the proposed iterative algorithm converge to the same stable point which is very close to the global optimum\footnote{Note that because they convergence to the same point, in the enlarged figures only one iterative curve and the exhaustive curve are shown due to the overlapping of the iterative curves.}. Moreover, the results also show that with a better initial point, i.e., a point that is close to the optimal point, the iterative algorithm converges within fewer iterations. Nevertheless, the algorithm convergences within less 7 iterations even with a very bad initial point, e.g., a point that is very close to one of the receivers. This shows a good applicability of the proposed iterative algorithm. }

\section{Conclusion}
In this work, we addressed the convex property of a nonlinear WPT  EH model. We showed that the harvested DC power via the nonlinear model is convex in the reciprocal of the power of the received RF signal. This result indicates that
the harvested DC power is convex in the path-loss of the RF signal transmitting link, which facilitates WPT network designs, i.e., resource allocation, WPT devices positioning.
As an example, we provide a case study of applying the proved convexity in a WPT transmitter positioning problem.
Owning to the convexity, we approximate the non-convex problem and solve it in a interactive manner. The simulation results confirm the converging speed of the interactive algorithm as well as its performance in comparison to the  exhaustive search.

\bibliographystyle{IEEEtran}

\begin{thebibliography}{10}
\providecommand{\url}[1]{#1}
\csname url@samestyle\endcsname
\providecommand{\newblock}{\relax}
\providecommand{\bibinfo}[2]{#2}
\providecommand{\BIBentrySTDinterwordspacing}{\spaceskip=0pt\relax}
\providecommand{\BIBentryALTinterwordstretchfactor}{4}
\providecommand{\BIBentryALTinterwordspacing}{\spaceskip=\fontdimen2\font plus
\BIBentryALTinterwordstretchfactor\fontdimen3\font minus
  \fontdimen4\font\relax}
\providecommand{\BIBforeignlanguage}[2]{{%
\expandafter\ifx\csname l@#1\endcsname\relax
\typeout{** WARNING: IEEEtran.bst: No hyphenation pattern has been}%
\typeout{** loaded for the language `#1'. Using the pattern for}%
\typeout{** the default language instead.}%
\else
\language=\csname l@#1\endcsname
\fi
#2}}
\providecommand{\BIBdecl}{\relax}
\BIBdecl

\bibitem{backgournd}
 H.J. Visser, R.J.M. Vullers, “RF energy harvesting and transport for
wireless sensor network applications: principles and requirements,”
\emph{Proceedings of the IEEE}, Vol. 101, No. 6, June 2013


\bibitem{linear2}
 D. Hwang, D. I. Kim, and T. J. Lee, “Throughput maximization for
multiuser MIMO wireless powered communication networks,” \emph{IEEE
Trans. Veh. Technol.}, vol. 65, no. 7, pp. 5743–5748, Jul. 2016.

\bibitem{linear3}
 S. Yin and Z. Qu, “Resource allocation in cooperative networks with
wireless information and power transfer,” \emph{IEEE
Trans. Veh. Technol.},
vol. 67, no. 1, pp. 718–733, Jan. 2018.


\bibitem{Clerckx_a}  
  A. Boaventura, A. Collado, N. B. Carvalho, and A. Georgiadis, “Optimum behavior: Wireless power transmission system design through behavioral models and efficient synthesis techniques,” \emph{IEEE Microw. Mag.}, vol. 14, no. 2, pp. 26–35, Apr. 2013
 
\bibitem{Clerckx_b}     
B. Clerckx, A. Costanzo, A. Georgiadis, and N.B. Carvalho, “Toward 1G Mobile Power Networks: RF, Signal, and System Designs to Make Smart Objects Autonomous,” \emph{IEEE Microw. Mag.}, vol. 19, no. 6, pp. 69 – 82, Sept./Oct. 2018.

\bibitem{Clerckx_c}     
Y. Zeng, B. Clerckx and R. Zhang, “Communications and Signals Design for Wireless Power Transmission,” \emph{IEEE Trans. on Comm}, invited paper, Vol 65, No 5, pp 2264 – 2290, May 2017.

\bibitem{Clerckx_d}     
B. Clerckx, R. Zhang, R. Schober, D. W. K. Ng, D. I. Kim, and H. V. Poor, “Fundamentals of Wireless Information and Power Transfer: From RF Energy Harvester Models to Signal and System Designs,” \emph{IEEE JSAC}, vol. 37, no. 1, pp. 4-33, Jan 2019.




\bibitem{Bruno_2016}
B. Clerckx and E. Bayguzina, “Waveform Design for Wireless Power
Transfer,” \emph{IEEE Trans. on Sig. Proc.}, vol. 64, no. 23, Dec 2016.

\bibitem{Nonlinear1}
H. Tran, G. Kaddoum and K. T. Truong, "Resource allocation in SWIPT networks under a nonlinear energy harvesting model: power efficiency, user fairness, and channel nonreciprocity,"  \emph{IEEE
Trans. Veh. Technol.}, vol. 67, no. 9, pp. 8466-8480, Sept. 2018.

\bibitem{Nonlinear2}
L. Shi, W. Cheng, Y. Ye, H. Zhang and R. Q. Hu, "Heterogeneous power-splitting based two-way DF relaying with non-linear energy harvesting,"   \emph{IEEE  GLOBECOM}, Abu Dhabi, United Arab Emirates, 2018, pp. 1-7.

\bibitem{Nonlinear3}
E. Boshkovska, D. W. K. Ng, N. Zlatanov, A. Koelpin and R. Schober, "Robust resource allocation for MIMO wireless powered communication networks based on a non-linear EH model," \emph{IEEE
Trans.    Commun.}, vol. 65, no. 5, pp. 1984-1999, May 2017.

\bibitem{Clerckx_TWC2018}
B. Clerckx and J. Kim, “On the Beneficial Roles of Fading and Transmit Diversity in Wireless Power Transfer with Nonlinear Energy Harvesting,” \emph{IEEE Trans. on Wireless Commun.}, vol. 17, no. 11, pp. 7731 – 7743, Nov. 2018.

\bibitem{Clerckx_e}       
B. Clerckx, “Wireless Information and Power Transfer: Nonlinearity, Waveform Design and Rate-Energy Tradeoff”, \emph{IEEE Trans. on Sig Proc.}, vol 66, no 4, pp 847-862, Feb 2018.

\bibitem{Clerckx_f} 
M. Varasteh, B. Rassouli and B. Clerckx, “Wireless Information and Power Transfer over an AWGN channel: Nonlinearity and Asymmetric Gaussian Signaling”, \emph{IEEE ITW 2017}.

\bibitem{Clerckx_g} 
M. Varasteh, B. Rassouli, and B. Clerckx, “On Capacity-Achieving Distributions for Complex AWGN Channels under Nonlinear Power Constraints and their Applications to SWIPT,” arXiv:1712.01226



\bibitem{Clerckx_TWC2019}
J. Kim, B. Clerckx, P. D. Mitcheson, ``Signal and System Design for Wireless Power Transfer: Prototype, Experiment and Validation'' online version,	arXiv:1901.01156

\bibitem{SIA}
B. R. Marks and G. P. Wright, ``A general inner approximation algorithm for nonconvex mathematical programs,'' \emph{Operations Research}, vol. 26, no. 4, pp. 681–683, 1978.


\end{thebibliography}

\end{document}